\newtheorem{fact}{Fact}  
\newtheorem{lemma}{Lemma}
\newtheorem{theorem}{Theorem}
\newcommand{\dg}{^\circ}
\newcommand{\eps}{\varepsilon}
\renewcommand{\vec}{\vv}
\title{A Note on the Area Requirement\\ of Euclidean Greedy Embeddings\\ of Christmas Cactus Graphs}
\author{Roman Prutkin\thanks{{Institute of Theoretical Informatics, Karlsruhe Institute of Technology, Germany}}}  
\date{}
\begin{document}

\maketitle

\begin{abstract}
  An Euclidean greedy embedding of a graph is a straight-line
  embedding in the plane, such that for every pair of vertices~$s$
  and~$t$, the vertex~$s$ has a neighbor~$v$ with smaller distance
  to~$t$ than~$s$. This drawing style is motivated by greedy geometric
  routing in wireless sensor networks.

  A Christmas cactus is a connected graph in which every two simple
  cycles have at most one vertex in common and in which every
  cutvertex is part of at most two biconnected blocks. It has been
  proved that Christmas cactus graphs have an Euclidean greedy
  embedding. This fact has played a crucial role in proving that every
  3-connected planar graph has an Euclidean greedy embedding. The
  proofs construct greedy embeddings of Christmas cactuses of
  exponential size, and it has been an open question whether
  exponential area is necessary in the worst case for greedy
  embeddings of Christmas cactuses. We prove that this is indeed the
  case.~\footnote{This problem has been stated by Ankur Moitra in his
    presentation at the 49th Annual IEEE Symposium on Foundations of
    Computer Science (FOCS'08)~\cite{moitra_results_2008},
    \url{http://people.csail.mit.edu/moitra/docs/ftl.pdf}}

\end{abstract}

\section{Introduction}

Consider a graph~$G=(V,E)$ and a straight-line embedding of~$G$ in the
Euclidean plane. For simplicity, we identify each vertex with the
corresponding point in~$\mathbb R^2$. An embedding of~$G$ is
\emph{greedy} if for every pair~$s,t \in V$, vertex~$s$ has a
neighbor~$v$ in~$G$, for which it is~$|v t| < |s t|$, where~$|p q|$
denotes the Euclidean distance between points~$p$
and~$q$. Equivalently, every pair~$s,t \in V$ is joined by a
distance-decreasing, or \emph{greedy}, path.

Greedy embeddings are motivated by geometric routing in wireless sensor
networks. Given such an embedding, we can use vertex coordinates as
addresses. To route a message to a destination, a vertex can simply
forward the message to a neighbor that is closer to the destination,
and a successful delivery is guaranteed.

The existence of greedy embeddings has been studied for various graph
classes. Papadimitriou and
Ratajczak~\cite{papadimitriou_conjecture_2005} conjectured that every
3-connected planar graph has a greedy embedding in the Euclidean
plane. This conjecture has been proved independently by Leighton and
Moitra~\cite{leighton_results_2009} and Angelini et
al.~\cite{angelini_algorithm_2010}. Both proofs use the fact that
3-connected planar graphs have a spanning \emph{Christmas cactus}
subgraph. A Christmas cactus is a connected graph in which every two
simple cycles have at most one vertex in common and in which every
cutvertex is part of at most two biconnected blocks. The authors show
that every Christmas cactus has a greedy embedding. However, both
constructions produce embeddings of exponential size in the worst
case.

In order for the greedy embedding to be practical for geometric
routing, it must be possible to represent vertex coordinates using
only few bits, otherwise, message headers containing the destination
address would be too big~\cite{eppstein_succinct_2011}. Goodrich and
Strash~\cite{goodrich_succinct_2009} showed how to construct an
Euclidean greedy embedding of a Christmas cactus, in which the
coordinates of every vertex can be encoded using only~$O(\log n)$
bits. In the presented encoding scheme for the vertex coordinates,
their positions in the Euclidean plane are not stored explicitly, and
the drawings might still have exponential size. Angelini et
al.~\cite{angelini_succinct_2012} proved that some trees require
exponential aspect ratio of the edge lengths. It was open whether this
bound also holds for Christmas cactuses. In this note we prove
Moitra's conjecture that Euclidean greedy embeddings of Christmas
cactuses require exponential area in the worst
case~\cite{leighton_results_2009}.

\begin{figure}[tb]
\hfill
  \subfloat[]{\includegraphics[page=1]{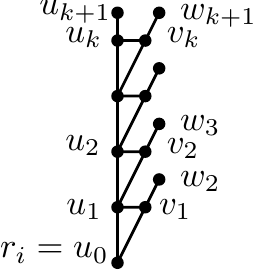} \label{fig:Gk}}
\hfill
  \subfloat[]{\includegraphics[page=2]{fig/exponential} \label{fig:Fk}}
\hfill\null  
\caption{Family of Christmas cactuses that requires exponential area
  for every greedy embedding. \protect\subref{fig:Gk}~Cactus~$G_k$
  for~$k=4$; \protect\subref{fig:Fk}~cactus~$F_k$ constructed by
  attaching the roots of 30~copies of~$G_k$ to a cycle of size~31.}
\end{figure}

\section{Exponential worst case resolution}

We now present a family of Christmas cactuses that requires
exponential aspect ratio of edge lengths in every greedy
embedding. For an integer~$k \geq 1$, consider the Christmas
cactus~$G_k$ with root~$r_i$ in Fig.~\ref{fig:Gk}. We then construct
the cactus~$F_k$ by attaching the roots of 30~copies of~$G_k$ to a
cycle of size~31; see Fig.~\ref{fig:Fk}. We shall prove that the
aspect ratio of edge lengths in every greedy embedding of~$F_k$ is at
least~$2^k$. The following fact follows from Lemma~3
in~\cite{nollenburg_self-approaching_2016}.

\begin{fact}
  Every greedy embedding of~$F_k$ contains a greedy embedding of~$G_k$, in
  which every pair of vectors
  from~$\bigcup_i \{ \vv{u_i u_{i+1}}, \vv{u_i v_{i+1}}, \vv{v_i
    w_{i+1}} \}$ forms an angle of less than~$12\dg$.
  \label{lem:narrow-Gk}
\end{fact}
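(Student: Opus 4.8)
Since the paper attributes the statement to Lemma~3 of~\cite{nollenburg_self-approaching_2016}, I would simply reconstruct that argument in the concrete setting of $F_k$. Fix a greedy embedding $\Gamma$ of $F_k$. There are two things to establish: (a) the drawing that $\Gamma$ induces on \emph{every} copy $\kappa$ of $G_k$ is already a greedy embedding of $G_k$; and (b) among the $30$ copies there is one for which all the vectors in $\bigcup_i\{\vv{u_i u_{i+1}},\vv{u_i v_{i+1}},\vv{v_i w_{i+1}}\}$ lie in a common wedge of opening angle less than~$12\dg$.

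For part~(a) I would use that a greedy embedding joins every ordered pair by a distance‑decreasing path (walk greedily; strict decrease plus finiteness forces termination at the target), and that such a path is vertex‑simple. Now let $\kappa$ be a copy of $G_k$, attached to the rest of $F_k$ only at its root $r$. If $s,t$ are vertices of $\kappa$ and the greedy $s$–$t$ path in $\Gamma$ ever leaves $\kappa$, then it must leave it and later re‑enter it, each time through the only vertex of $\kappa$ adjacent to the outside, namely $r$; this visits $r$ twice, contradicting simplicity. Hence the path stays inside $\kappa$, so it witnesses a greedy neighbour inside $\kappa$ for every pair of vertices of $\kappa$, i.e.\ $\Gamma|_\kappa$ is a greedy embedding of $G_k$.

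For part~(b) I would run a pigeonhole argument over the $30$ copies. The geometric key is the following: if $e=\vv{x y}$ is a backbone edge of a copy $\kappa_j$ whose deletion (or whose endpoint, if $e$ lies on a cycle) separates a rootward part containing $r$ from a leafward part $B$, then a greedy path from any vertex of $B$ to any target $t$ \emph{outside} $\kappa_j$ must cross this separation from $B$ towards $r$, and that crossing step decreases the distance to $t$; hence $t$ lies in the open half‑plane bounded by the perpendicular bisector of $x y$ whose inner normal is $-\vv{x y}$. Intersecting these half‑planes over all backbone edges of $\kappa_j$, every one of the other $29$ copies (and the rest of the $31$‑cycle) is confined to a convex region that ``opens'' in the direction opposite to $\kappa_j$'s backbone vectors. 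Reading this the other way, each copy $\kappa_j$ occupies its own angular wedge $W_j$, and following the $31$‑cycle once around one can charge the $30$ wedges to essentially disjoint angular intervals, leaving a positive residual for the cycle vertex that carries no copy; so $\sum_j |W_j| < 360\dg$ and some $W_j$ has opening angle below $360\dg/30 = 12\dg$. That copy is the one in the statement.

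The step I expect to be the main obstacle is exactly the last one: turning the $30$ per‑copy half‑plane constraints — which depend on vertex positions we do not control — into a single additive angular budget that sums to less than~$360\dg$. To make this rigorous one has to commit, for each copy, to the right direction to charge against the budget (e.g.\ a direction that separates the other $29$ roots from the interior of that copy, or the direction of an extreme backbone edge), and then verify that these $30$ directional intervals are pairwise disjoint up to the small slack provided by using a cycle of length $31$ rather than $30$. Once that combinatorial–geometric bookkeeping is set up, the half‑plane observation and part~(a) are routine.
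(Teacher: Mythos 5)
The paper itself does not prove this Fact: it is imported wholesale from Lemma~3 of~\cite{nollenburg_self-approaching_2016}, so there is no in-paper argument to compare yours against, and I can only judge your reconstruction on its own terms. Part~(a) is correct and essentially complete: a distance-decreasing $s$--$t$ walk exists for every ordered pair (step to any closer neighbour; strict decrease plus finiteness forces termination at $t$), it is vertex-simple, and since the root is the unique vertex of a copy with neighbours outside that copy, such a path between two vertices of the copy would have to visit the root twice if it ever left it. Its first edge then supplies the required neighbour inside the copy. This is the standard restriction argument and is fine.

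Part~(b), however, is where the entire content of the Fact sits, and your sketch has two genuine gaps there. First, the half-plane observation is only justified for bridge edges: if $\vv{x y}$ lies on a cycle of the cactus --- and the listed edges $u_iu_{i+1}$, $u_iv_{i+1}$, $v_iw_{i+1}$ do lie on cycles, which is precisely why the proof of Lemma~\ref{lem:lengths-decrease} has to argue that every greedy $a$--$d$ path contains \emph{both} $b$ and $c$ --- then the leafward part is separated by a cut vertex rather than by the edge, and a distance-decreasing path may cross the separation along the \emph{other} cycle edge incident to that cut vertex. You then obtain only a disjunction of half-plane constraints, not the clean conclusion $|x t|<|y t|$, and recovering the angle bounds needs the specific block structure of $G_k$. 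Second, and more importantly, the pigeonhole is not actually carried out. The constraints place each outside target $t$ in a half-plane anchored at a vertex $y$ that varies over the copy, so the phrase ``each copy occupies its own angular wedge $W_j$'' has no meaning until you fix a common apex or convert the positional constraints into purely directional ones; and the assertion that the thirty wedges are pairwise essentially disjoint, so that their openings sum to less than $360^\circ$, is exactly the statement to be proved. You flag this as the main obstacle yourself, but naming the obstacle is not the same as clearing it: as written, the argument shows only that each copy's potential targets are confined to a convex region, not that some copy's edge directions span less than $12^\circ$. The plan is consistent with what Lemma~3 of the cited paper must do, but it is not yet a proof of the Fact.
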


From now on, we consider the embedding of~$G_k$ from
Fact~\ref{lem:narrow-Gk}.

\begin{lemma}
  For~$0 \leq i \leq k-1$, it holds:
  $|u_{i+1} u_{i+2}| < \frac{1}{2} |u_i u_{i+1}|$.
 \label{lem:lengths-decrease}
\end{lemma}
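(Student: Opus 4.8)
The idea is to argue inside the narrow cone supplied by Fact~\ref{lem:narrow-Gk} and to obtain the factor~$\tfrac12$ from the greedy condition at the pendant vertex~$w_{i+1}$, whose only neighbor is~$v_i$. Normalize by translating~$u_i$ to the origin and rotating so that~$\vv{u_iu_{i+1}}$ points in the positive $x$-direction, and write $x(q)$ for the $x$-coordinate of a point~$q$. By Fact~\ref{lem:narrow-Gk} each of the vectors $\vv{u_iv_{i+1}}$, $\vv{u_{i+1}u_{i+2}}$, $\vv{u_{i+1}v_{i+2}}$, $\vv{v_iw_{i+1}}$, $\vv{v_{i+1}w_{i+2}}$ makes an angle below~$12\dg$ with the positive $x$-axis, so such a vector of length~$\ell$ has $x$-coordinate in $(\ell\cos 12\dg,\ell]$, and the distance between two points differing by such a vector lies between the difference of their $x$-coordinates and $1/\cos 12\dg$ times it. In particular $x(u_{i+1})=|u_iu_{i+1}|$ and $x(u_{i+2})-x(u_{i+1})\ge\cos 12\dg\cdot|u_{i+1}u_{i+2}|$; and (by a short argument about the triangle at level~$i$) $v_i$ sits within a lower-order distance of~$u_i$, hence of the origin. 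So it suffices to show that $x(u_{i+2})$ exceeds $x(u_{i+1})$ by at most $\tfrac12\cos 12\dg\cdot|u_iu_{i+1}|$.

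For the core step, use that $w_{i+1}$ can be reached only through~$v_i$: the greedy condition forces $|v_it|<|w_{i+1}t|$ for every vertex~$t\neq w_{i+1}$, equivalently, the foot of the perpendicular from~$t$ to the line through $v_i$ and $w_{i+1}$ lies strictly on the $v_i$-side of the midpoint of~$v_iw_{i+1}$. Taking $t=u_{i+1}$ gives the lower bound $|v_iw_{i+1}|>2\cos 12\dg\cdot|u_iu_{i+1}|$, so $w_{i+1}$ is pushed well past~$u_{i+1}$; taking $t=u_{i+2}$ bounds $x(u_{i+2})$ from above by essentially $\tfrac12|v_iw_{i+1}|/\cos 12\dg$. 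What is still missing is a matching \emph{upper} bound on $|v_iw_{i+1}|$; for this I would play the greedy conditions of two consecutive pendants against each other---the condition at~$w_{i+2}$ with target~$w_{i+1}$ forces $|v_{i+1}w_{i+2}|$ to be large relative to $|v_iw_{i+1}|$, while the condition at~$w_{i+1}$ with target~$w_{i+2}$ forces $|v_{i+1}w_{i+2}|$ to be small, and together they confine $|v_iw_{i+1}|$ to a constant multiple of $|u_iu_{i+1}|$ that is small enough to absorb the cosine losses. Substituting back yields $x(u_{i+2})-x(u_{i+1})<\tfrac12\cos 12\dg\cdot|u_iu_{i+1}|$, hence the claim.

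The step I expect to be the main obstacle is precisely this quantitative bookkeeping: the target constant~$\tfrac12$ is clean, but each estimate drags along factors $\cos 12\dg$ or $1/\cos 12\dg$ together with the bounded-but-nonzero off-axis displacements of $v_i$, $w_{i+1}$ and $u_{i+2}$, so one has to verify that strict inequality below~$\tfrac12$ really survives---the value~$12\dg$ in Fact~\ref{lem:narrow-Gk} is presumably calibrated so that it does. A secondary technical point is to make the ``nearest neighbor'' choices along the greedy walks used above unambiguous, i.e.\ to check that the competing neighbors of each relevant vertex are far enough apart inside the cone that the edge a greedy walk must take is determined; this is what justifies treating $w_{i+1}$ and $w_{i+2}$ as the only pendants that matter here.
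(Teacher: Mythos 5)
Your route is genuinely different from the paper's: the paper never invokes the ``every vertex is closer to $v$ than to its pendant $w$'' property through coordinate projections, but instead extracts angle bounds from the forced bidirectionally greedy path $u_{i+2}\,u_{i+1}\,v_{i+1}\,w_{i+2}$ (all four angles of the sub-triangles exceed $60\dg$, the back/front rays diverge) and then applies the law of sines twice, in the thin triangle $u_i u_{i+1} v_{i+1}$ and in the triangle $u_{i+1} v_{i+1} u_{i+2}$, obtaining $|u_{i+1}v_{i+1}|<0.36\,|u_iu_{i+1}|$ and $|u_{i+1}u_{i+2}|<1.28\,|u_{i+1}v_{i+1}|$. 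As it stands, your plan has a fatal flaw at its foundation: the simplification that ``$v_i$ sits within a lower-order distance of $u_i$'' is false, not merely unproved. The greedy conditions force $|u_iv_i|$ to be of the \emph{same} order as $|u_iu_{i+1}|$ (the paper's estimate $|ab|<1.28\,|bc|$, read one level down, gives $|u_iv_i|>|u_iu_{i+1}|/1.28$), while the only available upper bound is $|u_iv_i|<0.36\,|u_{i-1}u_i|$, and the ratio $|u_{i-1}u_i|/|u_iu_{i+1}|$ has no a priori upper bound. Worse, if you really could place $v_i$ at the origin, your own inequalities would be contradictory: taking $t=u_{i+1}$ would give $|v_iw_{i+1}|>2\cos 12\dg\,|u_iu_{i+1}|$ and hence $x(w_{i+1})>2\cos^2 12\dg\,|u_iu_{i+1}|\approx 1.91\,|u_iu_{i+1}|$, whereas routing from $u_{i+1}$ \emph{into} the pendant is forced through the cutvertex $u_i$, so $|u_{i+1}w_{i+1}|>|u_iw_{i+1}|$ and $x(w_{i+1})<\tfrac12|u_iu_{i+1}|$. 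The sideways displacement of $v_i$ by a constant fraction of $|u_iu_{i+1}|$ is exactly what makes these constraints simultaneously satisfiable, so it cannot be discarded as an error term; every projection estimate in your plan (the lower bound on $|v_iw_{i+1}|$, the upper bound on $x(u_{i+2})$) picks up a correction of the same magnitude as the quantity you are bounding.

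The second gap is the one you flag yourself: the upper bound $|v_iw_{i+1}|\le C\,|u_iu_{i+1}|$ is not a refinement but the crux, and the admissible window is narrow --- your lower bound already places $|v_iw_{i+1}|$ near $2|u_iu_{i+1}|$, while the target inequality needs roughly $|v_iw_{i+1}|<3|u_iu_{i+1}|$ after the $\cos 12\dg$ losses. The ``play two consecutive pendants against each other'' idea is not carried out, and the natural source of such an upper bound ($w_{i+1}$ must be closer to $u_i$ than to $u_{i+1}$, giving $x(w_{i+1})<\tfrac12|u_iu_{i+1}|$) again involves $x(v_i)$, so you are driven back to locating $v_i$ --- which is precisely what the paper's angle analysis delivers and your sketch skips. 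To repair the argument you would essentially have to reprove that the triangle $u_{i-1}u_iv_i$ is thin with base angles in $(60\dg-2\eps,120\dg+\eps)$ and that $\angle u_iu_{i+1}v_i$ and $\angle u_iv_iu_{i+1}$ are pinned near $60\dg$; at that point the law-of-sines computation of the paper finishes immediately, and the coordinate bookkeeping becomes superfluous.
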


\begin{figure}[tb]
\centering
\includegraphics[page=3]{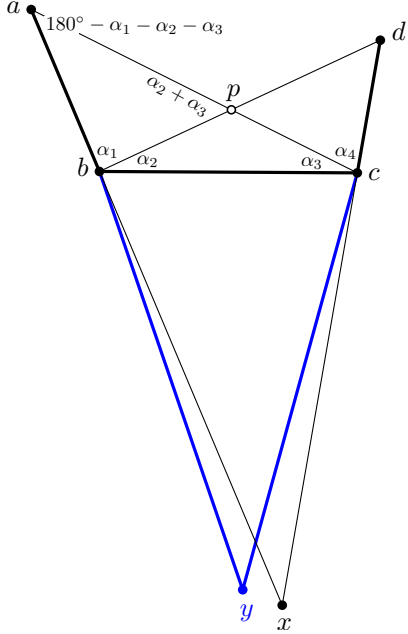} 
\caption{Proof of Lemma~\ref{lem:lengths-decrease}.}
\label{fig:lengths-decrease}
\end{figure}

\begin{proof}
  We rename the vertices for brevity: $a = u_{i+2}$, $b = u_{i+1}$,
  $c = v_{i+1}$, $d = w_{i+2}$, $y = u_i$; see
  Fig.~\ref{fig:lengths-decrease}.  Note that every greedy $a$-$d$
  path as well as every greedy $d$-$a$ path must contain~$b$
  and~$c$. Therefore, the path~$a b c d$ is greedy in both
  directions. Thus, the ray with origin~$b$ and direction~$\vec{b a}$
  and the ray with origin~$c$ and direction~$\vec{c d}$
  diverge~\cite{angelini_succinct_2012}. The paths~$a b d$ and~$a c d$
  are also greedy in both directions, therefore,
  $\alpha_1 = \angle a b d > 60\dg$
  and~$\alpha_4 = \angle a c d > 60\dg$.

  Let~$x$ be the intersection point of the lines through~$a b$
  and~$c d$. Let~$\eps = 12\dg$. Since~$G_k$ has been chosen according
  to Fact~\ref{lem:narrow-Gk}, it is $\angle x b y < \eps$
  and~$\angle x c y < \eps$.

  It is~$\angle c b x = 180\dg - \angle a b c < 120 \dg$. Similarly,
  $\angle b c x < 120\dg$. Also, $\angle b x c < \eps$. Thus, by
  considering the triangle~$b c x$ it
  follows:~$\angle c b x > 60\dg - \eps$
  and~$\angle b c x > 60\dg - \eps$. Since it
  is~$60\dg - \eps < \angle c b x < 120\dg$, it
  is~$60\dg - 2 \eps < \angle c b y < 120\dg + \eps$. Analogously, it
  is~$60\dg - 2 \eps < \angle b c y < 120\dg + \eps$. It follows:

  \begin{equation*}
    \frac{|b c|}{|b y|} = \frac{\sin \angle b y c}{\sin \angle b c y} < \frac{\sin \eps }{\sin(60\dg - 2 \eps)} < 0.36.
  \end{equation*}

  Therefore, it is~$|b c| < 0.36 |b y|$ and, analogously,
  $|b c| < 0.36 |c y|$.

  Next, recall that it
  is~$\angle b x c = \alpha_1 + \alpha_2 + \alpha_3 + \alpha_4 -
  180\dg < \eps$, for~$\alpha_2 = \angle d b c$
  and~$\alpha_3 = \angle a c b$.
  Therefore,~$\angle b a c = 180\dg - \alpha_1 - \alpha_2 - \alpha_3 >
  \alpha_4 - \eps > 60\dg - \eps$. Also, since the path~$a b c$ is
  greedy in both directions, it is~$\angle b a c < 90\dg$. Now
  consider~$\angle a c b = \alpha_3$.
  Since~$\angle b c x > 60\dg - \eps$, it is
  $\alpha_3 + \alpha_4 < 120\dg + \eps$,
  and~$\alpha_3 < 60\dg + \eps$. Therefore,

  \begin{equation*}
    \frac{|a b|}{|b c|} = \frac{\sin \angle a c b}{\sin \angle b a c} = \frac{\sin \alpha_3}{\sin (180\dg - \alpha_1 - \alpha_2 - \alpha_3)} < \frac{\sin (60\dg + \eps) }{\sin(60\dg - \eps)} < 1.28.
  \end{equation*}

  Thus, $|a b| < 1.28 |b c|$. It follows:
  $|a b| < 1.28 |b c| < 1.28 \cdot 0.36 |b y| < 0.461 |b
  y|$. Therefore, we have
  $|u_{i+1} u_{i+2}| < \frac{1}{2}|u_i u_{i+1}|$.
\end{proof}

\begin{theorem}
  In every greedy embedding of cactus~$F_k$, the ratio of the longest
  and the shortest edge is in~$\Omega(2^{n/90})$, where~$n$ is the
  number of vertices of~$F_k$.
  \label{thm:exponential}
\end{theorem}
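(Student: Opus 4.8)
The plan is to combine Fact~\ref{lem:narrow-Gk} with a repeated application of Lemma~\ref{lem:lengths-decrease}, and then convert the resulting bound on a chain of edge lengths inside a single copy of~$G_k$ into a bound in terms of the total number of vertices~$n$ of~$F_k$. First I would invoke Fact~\ref{lem:narrow-Gk}: any greedy embedding of~$F_k$ induces a greedy embedding of a copy of~$G_k$ satisfying the $12\dg$ angle condition, so that Lemma~\ref{lem:lengths-decrease} is applicable to this copy. Since this embedding of~$G_k$ is the restriction of the given embedding of~$F_k$, every edge length occurring in it is the length of an actual edge of~$F_k$; in particular each $u_iu_{i+1}$ in the relevant index range is an edge of~$F_k$ drawn at that length.

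Next I would chain Lemma~\ref{lem:lengths-decrease} over $i = 0, 1, \dots, k-1$ and multiply the $k$ inequalities to get $|u_k u_{k+1}| < 2^{-k}\,|u_0 u_1|$. Consequently the longest edge of the embedding of~$F_k$ has length at least $|u_0 u_1|$, while the shortest edge has length at most $|u_k u_{k+1}|$, so the ratio of the longest to the shortest edge exceeds~$2^k$.

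Finally I would count vertices in order to express~$k$ in terms of~$n$. Reading the construction of~$G_k$ off Fig.~\ref{fig:Gk}, each of its~$k$ levels contributes the three vertices $u_i, v_i, w_i$, so $|V(G_k)| \le 3k + O(1)$; and since~$F_k$ is obtained by identifying the roots of~$30$ copies of~$G_k$ with~$30$ of the~$31$ vertices of a cycle, $n = 30\,|V(G_k)| + O(1) \le 90k + O(1)$, hence $k \ge n/90 - O(1)$. Plugging this into the bound from the previous paragraph, the aspect ratio exceeds $2^{\,n/90 - O(1)} = \Omega(2^{n/90})$, which is the claimed bound.

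The argument is essentially immediate once Lemma~\ref{lem:lengths-decrease} is in hand, so I do not expect a substantial obstacle. The only point requiring care is the bookkeeping in the last two steps: checking that the chained inequality uses only vertices and edges that genuinely exist in~$G_k$ (i.e., that $u_0,\dots,u_{k+1}$ is a legitimate index range for the construction), and nailing down the additive constants in $|V(G_k)|$ and therefore in~$n$ so that the constant~$90$ in the exponent of $2^{n/90}$ comes out exactly right.
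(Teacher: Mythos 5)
Your proposal is correct and follows exactly the paper's own argument: apply Fact~\ref{lem:narrow-Gk} to obtain a copy of~$G_k$ to which Lemma~\ref{lem:lengths-decrease} applies, chain the lemma to get $|u_k u_{k+1}| < 2^{-k}|u_0 u_1|$, and count vertices ($|V(G_k)| = 3k+2$, $n = 90k+61$) to conclude the ratio is at least $2^k = \Omega(2^{n/90})$. No discrepancies.
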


\begin{proof}
  Cactus~$G_k$ has~$3 k + 2$ vertices. Thus, cactus~$F_k$
  has~$n = 90 k + 61$ vertices. By Lemma~\ref{lem:lengths-decrease},
  every greedy embedding of~$F_k$ contains an embedding of~$G_k$, such
  that it is~$|u_k u_{k+1}| < \frac{1}{2^k}|u_0 u_1|$. Therefore, the
  ratio of the longest and shortest edge in every greedy embedding
  of~$F_k$ is at least~$2^k = \Omega(2^{n/90})$.
\end{proof}

\subsection*{Acknowledgements}

The author thanks Martin Nöllenburg for valuable discussions and
comments.


\begin{thebibliography}{1}

\bibitem{angelini_succinct_2012}
P.~Angelini, G.~{Di Battista}, and F.~Frati.
\newblock Succinct greedy drawings do not always exist.
\newblock {\em Networks}, 59(3):267--274, 2012.

\bibitem{angelini_algorithm_2010}
P.~Angelini, F.~Frati, and L.~Grilli.
\newblock An algorithm to construct greedy drawings of triangulations.
\newblock {\em J. Graph Algorithms Appl.}, 14(1):19--51, 2010.

\bibitem{eppstein_succinct_2011}
D.~Eppstein and M.~T. Goodrich.
\newblock Succinct greedy geometric routing using hyperbolic geometry.
\newblock {\em IEEE Transactions on Computers}, 60(11):1571--1580, 2011.

\bibitem{goodrich_succinct_2009}
M.~T. Goodrich and D.~Strash.
\newblock Succinct greedy geometric routing in the {{Euclidean}} plane.
\newblock In Y.~Dong, D.-Z. Du, and O.~Ibarra, editors, {\em Algorithms and
  Computation (ISAAC'09)}, volume 5878 of {\em LNCS}, pages 781--791. Springer,
  2009.

\bibitem{leighton_results_2009}
T.~Leighton and A.~Moitra.
\newblock Some results on greedy embeddings in metric spaces.
\newblock {\em Discrete Comput. Geom.}, 44(3):686--705, 2009.

\bibitem{moitra_results_2008}
A.~Moitra and T.~Leighton.
\newblock Some results on greedy embeddings in metric spaces.
\newblock In {\em Foundations of Computer Science (FOCS'08)}, pages 337--346,
  2008.

\bibitem{nollenburg_self-approaching_2016}
M.~N{\"o}llenburg, R.~Prutkin, and I.~Rutter.
\newblock On self-approaching and increasing-chord drawings of 3-connected
  planar graphs.
\newblock {\em J. Comput. Geom.}, 7(1):47--69, 2016.

\bibitem{papadimitriou_conjecture_2005}
C.~H. Papadimitriou and D.~Ratajczak.
\newblock On a conjecture related to geometric routing.
\newblock {\em Theoret. Comput. Sci.}, 344(1):3--14, 2005.

\end{thebibliography}

\end{document}